\theoremstyle{definition}
\newtheorem{theorem}              {Theorem}
\newtheorem{lemma}      [theorem] {Lemma}
\newtheorem{corollary}  [theorem] {Corollary}
\newtheorem{problem}    [theorem] {Problem}
\numberwithin{equation}{section}
\numberwithin{figure}{section}
\numberwithin{table}{section}
\newcommand{\argmax}{\operatornamewithlimits{argmax}}
\def\ps@pprintTitle{%
 \let\@oddhead\@empty
 \let\@evenhead\@empty
 \def\@oddfoot{\centerline{\thepage}}%
 \let\@evenfoot\@oddfoot}
\newcommand{\COMM}[2]{{
\begin{CJK}{UTF8}{ipxm}
\ifthenelse{\equal{#1}{TM}}{\color{blue}}{
\ifthenelse{\equal{#1}{KA}}{\color{red}}{
\ifthenelse{\equal{#1}{AA}}{\color{cyan}}{
\ifthenelse{\equal{#1}{BB}}{\color{magenta}}}}}
[#1: #2]
\end{CJK}
}}
\begin{document}

\begin{frontmatter}

\title{Optimal Algorithm to Reconstruct a Tree from a Subtree Distance}

\author{Takanori Maehara}
\address{RIKEN Center for Advanced Intelligence Project \\ takanori.maehara@riken.jp}

\author{Kazutoshi Ando}
\address{Department of Mathematical and Systems Engineering, Shizuoka University \\ ando.kazutoshi@shizuoka.ac.jp}

\begin{abstract}
This paper addresses the problem of finding a representation of a subtree 
distance, which is an extension of the tree metric. 
We show that a minimal representation is uniquely determined by a given subtree distance, and give a linear time algorithm that finds such a representation. 
This algorithm achieves the optimal time complexity. 
\end{abstract}

\begin{keyword}
graph algorithm, phylogenetic tree, tree metric, subtree distance 
\end{keyword}

\end{frontmatter}

\section{Introduction and the results}

A \emph{phylogenetic tree} represents an evolutionary relationship among the species that are investigated.
Estimating a phylogenetic tree from experimental data is a fundamental problem in phylogenetics~\cite{wiley2011phylogenetics}.
One of the commonly used approaches to achieve this task is the use of a \emph{distance-based method}.
In this approach, we first compute the dissimilarity (i.e., a nonnegative and symmetric function) between the species by, e.g., the edit distance between the genome sequences.
Then, we find a weighted tree having the shortest path distance that best fits the given dissimilarity.
The most popular method for this approach is the neighbor-joining method~\cite{saitou1987neighbor}. 

A weighted tree $\mathcal{T}$ is specified by the set of vertices $\mathcal{V}(\mathcal{T})$, the set of edges $\mathcal{E}(\mathcal{T})$, and the nonnegative edge weight $l:\mathcal{E}(\mathcal{T})\to\mathbb{R}_+$.
Let us consider the case in which a given dissimilarity $d: X\times X\to \mathbb{R}$ exactly fits some weighted tree; i.e., there exists a weighted tree $\mathcal{T}$ and a mapping $\psi: X\to \mathcal{V}(\mathcal{T})$ such that
\begin{align}
  \label{eq:exactly_fits} 
  d(x,y)=d_\mathcal{T}(\psi(x),\psi(y)) \quad(x,y\in X),
\end{align}
where $d_\mathcal{T}(u,v)$ is the distance between $u$ and $v$ in $\mathcal{T}$ for $u, v \in \mathcal{V}(\mathcal{T})$.
In this case, the dissimilarity $d:X\times X\to \mathbb{R}$ is called a \emph{tree metric}, and the pair $(\mathcal{T},\psi)$ is called a \emph{representation} of $d$.
It is known that a dissimilarity $d$ is a tree metric if and only if it satisfies an inequality called the \emph{four-point condition}~\cite{zaretskii1965constructing,buneman1971recovery}, which is given by
\begin{equation}
  d(x,y)+d(z,w)\leq\max\{d(x,z)+d(y,w),d(x,w)+d(y,z)\} 
\end{equation}
for any $x, y, z, w \in X$.
For any tree metric $d$, there exists a unique minimal representation $(\mathcal{T},\psi)$ of $d$ \cite{buneman1971recovery}.
Here, a representation is \emph{minimal} if there is no representation $(\mathcal{T}',\psi')$ of $d$, such that $\mathcal{T}'$ is obtained by removing some vertices and edges and/or by contracting some edges of $\mathcal{T}$ (i.e., $\mathcal{T}'$ is a proper minor of $\mathcal{T}$).
Furthermore, such a representation is constructed in $O(n^2)$ time~\cite{culberson1989fast}, where $n=|X|$. 

In some applications, we are interested in the distance between \emph{groups of species} (e.g., genus, tribe, or family). 
In such a case, we aim to identify a group as a connected subgraph in a phylogenetic tree. 
The subtree distance, which was introduced by Hirai~\cite{hirai2006characterization}, is an extension of the tree metric that can be adopted for use in such situations. 
A function $d:X\times X\to\mathbb{R}$ is called a \emph{subtree distance} if there exists a weighted tree $\mathcal{T}$ and a mapping $\phi: X\to 2^{\mathcal{V}(\mathcal{T})}$ such that $\phi(x)$ induces a subtree of $\mathcal{T}$ (i.e., a connected subgraph of $\mathcal{T}$) for $x\in X$ and equations 
\begin{align}
  \label{eq:exactly_fits2} 
  d(x,y)=d_\mathcal{T}(\phi(x),\phi(y)) \quad(x,y\in X),
\end{align}
hold, where $d_\mathcal{T}(U,W)=\min\{d_\mathcal{T}(u,w)\mid u\in U, w\in W\}$ for $U,W\subseteq\mathcal{V}(\mathcal{T})$.
We say that a pair $(\mathcal{T},\phi)$ is a \emph{representation} of $d$.
Note that a subtree distance is not necessarily a metric because it may not satisfy the non-degeneracy ($d(x,y)>0$ for $x \neq y$) and the triangle inequality ($d(x,z)\leq d(x,y)+d(y,z)$). 
Hirai proposed a characterization of subtree distances in which a dissimilarity $d:X\times X\to \mathbb{R}$ is a subtree distance if and only if it satisfies an inequality called the \emph{extended four-point condition}, which is given by 
\begin{align}
  \label{eq:4pc'}
  d(x,y)+d(z,w) \leq \max\left\{
  \begin{array}{l}
  d(x,z)+d(y,w),d(x,w)+d(y,z),\\
  d(x,y),d(z,w),\\
  \frac{d(x,y)+d(y,z)+d(z,x)}{2},\frac{d(x,y)+d(y,w)+d(w,x)}{2},\\
  \frac{d(x,z)+d(z,w)+d(w,x)}{2},\frac{d(y,z)+d(z,w)+d(w,y)}{2}
  \end{array}
\right\}
\end{align}
for any $x,y,z,w\in X$.
The extended four-point condition yields an $O(n^4)$ time algorithm to recognize a subtree distance. 

This paper addresses the following problem, which we call \emph{the subtree distance reconstruction problem}. 
\begin{problem}
Given a subtree distance $d: X\times X\to\mathbb{R}$ on a finite set $X$, find a representation $(\mathcal{T},\phi)$ of $d$. 
\end{problem}
\noindent Ando and Sato~\cite{ando2017algorithm} proposed an $O(n^3)$ time algorithm for this problem.
Their algorithm consists of three steps: 
(1) identify a subset $V_0=\{x\in X\mid d(y,z)\leq d(x,y)+d(x,z)\ (y,z\in X)\}$, 
(2) find a representation $(\mathcal{T},\phi)$ for the restriction of $d$ onto $V_0$, and 
(3) for $x\in X\setminus V_0$, locate $\phi(x)$ in $\mathcal{T}$ by examining a connected components of $\mathcal{T}\setminus \phi(x)$. 

In this study, we propose the following theorems. 
We define a minimal representation for a subtree distance in the 
same manner as a minimal representation for a tree metric. 
\begin{theorem}\label{th:main_theorem}
For a subtree distance $d:X\times X\to \mathbb{R}$, a minimal representation 
$(\mathcal{T},\phi)$ is uniquely determined by $d$. 
\end{theorem}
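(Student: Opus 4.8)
The plan is to prove uniqueness by extracting, directly from the values of $d$, a complete set of invariants that every minimal representation must realize, and then showing that these invariants determine the tree $\mathcal{T}$, its edge weights, and the subtrees $\phi(x)$ up to isomorphism. The argument follows the template of the classical uniqueness result for tree metrics \cite{buneman1971recovery}, with the additional work concentrated in pinning down the subtrees.

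First I would reduce to a tree-metric core using the set $V_0=\{x\in X\mid d(y,z)\le d(x,y)+d(x,z)\ (y,z\in X)\}$ of Ando and Sato \cite{ando2017algorithm}. Here $d(x,x)=d_{\mathcal{T}}(\phi(x),\phi(x))=0$ and the defining inequality makes $d|_{V_0\times V_0}$ a metric; as in \cite{ando2017algorithm}, this restriction is in fact a tree metric, its elements admitting single-vertex images. By the uniqueness result for tree metrics \cite{buneman1971recovery}, its minimal representation $(\mathcal{T}_0,\psi_0)$ is uniquely determined, and its edge lengths are recovered from $d|_{V_0}$ by the standard four-point formulas. I would then show that any minimal representation of $d$ restricts on $V_0$ to this one, so that $(\mathcal{T}_0,\psi_0)$ is forced to sit inside $\mathcal{T}$ as a fixed backbone.

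Next I would locate the remaining subtrees. For $x\in X\setminus V_0$, the distances $d(x,y)$ with $y\in V_0$ prescribe, for each branch of the backbone $\mathcal{T}_0$, how far $\phi(x)$ is required to reach; following step~(3) of the Ando--Sato scheme, $\phi(x)$ is recovered by deciding, for each connected component of the tree obtained after deleting a candidate subtree, whether $\phi(x)$ must enter it. I would show that the extended four-point condition \eqref{eq:4pc'} forces these decisions to be mutually consistent and that there is a unique inclusion-minimal connected set realizing all the prescribed minimum distances; minimality of the representation then selects exactly this set, so each $\phi(x)$ is determined. The mutual distances among the elements of $X\setminus V_0$ introduce no further freedom, since they are already consistent with the subtrees just fixed.

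The step I expect to be the main obstacle is the uniqueness of this subtree placement. Unlike a point, a subtree has shape, and a priori several incomparable connected subsets of $\mathcal{T}_0$ could realize the same collection of minimum distances; ruling this out is exactly where the full strength of \eqref{eq:4pc'}---in particular the four triangle-halving terms, which are what allow $\phi(x)$ to violate the ordinary triangle inequality---must be combined with the edge-contraction/deletion notion of minimality to prove that a minimal subtree never extends past a vertex it is not required to reach, and is therefore unique. Once this is in hand, assembling the unique backbone $\mathcal{T}_0$, the unique subtrees $\phi(x)$, and the already-determined edge weights produces an isomorphism between any two minimal representations of $d$, which is the claim of Theorem~\ref{th:main_theorem}.
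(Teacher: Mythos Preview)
Your route is close in spirit to the paper's but differs in two concrete ways, and the second of these is where you still have a genuine gap.

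First, the core set. You reduce to the Ando--Sato set $V_0$ of objects satisfying all triangle inequalities, whereas the paper reduces to the set $L$ of \emph{leaf objects}: those $x$ with $\phi(x)$ a single leaf in some (hence every) minimal representation. One has $L\subseteq V_0$, and both restrictions $d|_L$ and $d|_{V_0}$ are tree metrics determining the same underlying tree, so this difference is largely cosmetic. The paper's choice is slightly sharper because $L$ admits a direct intrinsic characterization (via a farthest pair and a strict triangle inequality against a fixed leaf object $r$), which also feeds the $O(n^2)$ algorithm; but for the uniqueness theorem alone, your $V_0$ would serve.

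Second, and this is the substantive point: the obstacle you single out---uniqueness of the subtree placement for $x\notin V_0$---is not handled in the paper by the extended four-point condition at all. Your plan to invoke \eqref{eq:4pc'} and argue that the triangle-halving terms force a unique inclusion-minimal connected set is plausible but, as you yourself say, unproved; and it is not obvious how to make that argument precise. The paper sidesteps it entirely with a one-line geometric identity: fixing a leaf object $r$, one has
\[
\overline{\phi(z)}\;=\;\bigcup_{x\in L\setminus\{r\}} I\bigl(r,\,d(r,z);\,x,\,d(x,z)\bigr),
\]
where $I(r,a;x,b)$ is the sub-interval of the $\phi(r)$--$\phi(x)$ path at distance $\ge a$ from $\phi(r)$ and $\ge b$ from $\phi(x)$. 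This holds because any connected subset of a tree is the union of its intersections with the root-to-leaf paths, and each such intersection is determined by the two endpoint distances. Uniqueness of $\phi(z)$ is then immediate---no four-point condition, no minimality-of-subtree argument. If you keep your $V_0$-based backbone, the same formula (with $r$ any leaf object and $x$ ranging over $L$, or equivalently over all of $V_0$) closes your gap cleanly.
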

\begin{theorem}\label{th:main_theorem_alg}
There exists an $O(n^2)$ time algorithm that finds, for any subtree distance 
$d: X \times X \to \mathbb{R}$, its unique minimal representation, 
where $n = |X|$.
\end{theorem}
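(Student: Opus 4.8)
The plan is to design an \emph{incremental} algorithm that processes the elements of $X$ one at a time and, after the $k$-th step, maintains the unique minimal representation of the restricted subtree distance $d|_{X_k}$, where $X_k=\{x_1,\dots,x_k\}$. Since the restriction of a subtree distance is again a subtree distance (realized by $(\mathcal{T},\phi|_{X_k})$), Theorem~\ref{th:main_theorem} guarantees that each $d|_{X_k}$ has a well-defined unique minimal representation $(\mathcal{T}_k,\phi_k)$, so the invariant to be maintained is unambiguous. If each insertion can be carried out in time linear in the current size of the representation, the total running time is $\sum_k O(k)=O(n^2)$, matching the $\Omega(n^2)$ lower bound imposed by having to read the input matrix; this is what makes the algorithm optimal. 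The strategy directly generalizes the incremental $O(n^2)$ tree-metric reconstruction of Culberson and Rudnicki~\cite{culberson1989fast}, to which it must reduce when every $\phi(x)$ is a single vertex.

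Before describing the insertion step I would first prove the structural lemmas that make it well defined. The key one is a \emph{monotonicity} (or nestedness) property: the minimal representation of $d|_{X_k}$ is obtained from that of $d|_{X_{k+1}}$ by deleting the parts that become unnecessary once $x_{k+1}$ is removed, i.e.\ $(\mathcal{T}_k,\phi_k)$ is a minor of $(\mathcal{T}_{k+1},\phi_{k+1})$. Granting this, inserting $x_{k+1}$ amounts to a purely local modification of $\mathcal{T}_k$. I would also record the basic dichotomy that $\phi(x)$ is a single vertex precisely when $x$ satisfies the triangle inequality as a midpoint, i.e.\ $x\in V_0$; the elements of $X\setminus V_0$ are exactly those whose image spans at least one edge, and locating these larger subtrees is the bulk of the work. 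Note that the total output size is $O(n^2)$, since $\mathcal{T}$ has $O(n)$ vertices and there are $n$ subtrees, so producing every $\phi(x)$ explicitly is within budget.

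The insertion step itself proceeds as follows. Given $(\mathcal{T}_k,\phi_k)$ and the vector of distances $d(x_{k+1},x_i)$ for $i\le k$, I would determine $\phi_{k+1}(x_{k+1})$ by a single traversal of $\mathcal{T}_k$: the requirement $d(x_{k+1},x_i)=d_{\mathcal{T}}(\phi(x_{k+1}),\phi(x_i))$ identifies, for each branch of the tree, whether the new subtree must extend into that branch, whether it stops inside an existing edge (which is then subdivided by a new vertex), and where new pendant edges must be grown; the indices with $d(x_{k+1},x_i)=0$ mark the portion of $\mathcal{T}_k$ that $\phi(x_{k+1})$ must cover. Because $\mathcal{T}_k$ has $O(k)$ vertices and each branch decision uses only precomputed pairwise distances, the traversal and the resulting local surgery---subdividing edges, attaching pendants, and contracting any edge rendered redundant so as to restore minimality---take $O(k)$ time.

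The main obstacle is precisely this insertion step together with the monotonicity lemma underlying it: one must show that adding a single element perturbs the minimal representation only locally, and that the correct subtree, its new vertices, and the new edge lengths can all be read off from the distance vector in linear time. In the tree-metric case a new leaf attaches at a single point determined by a Gromov product, but here the new image is a subtree whose boundary may touch several edges, so one must prove that the local tests reconstruct this boundary correctly and that enforcing minimality after each insertion preserves the invariant. Establishing this---so that the uniqueness provided by Theorem~\ref{th:main_theorem} certifies the maintained object is exactly $(\mathcal{T}_k,\phi_k)$---is the crux; the remaining bookkeeping, and the reduction to the known tree-metric procedure when $X\setminus V_0=\emptyset$, are routine.
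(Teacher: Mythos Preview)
Your approach is genuinely different from the paper's. The paper does \emph{not} build the representation incrementally; it (1) identifies in $O(n^2)$ time the set $L$ of \emph{leaf objects} (those $x$ with $\phi(x)$ a singleton leaf) via a farthest-pair argument and a triangle-type test against a fixed leaf $r$, (2) observes that $d|_L$ is an ordinary tree metric and reconstructs its $O(|L|)$-vertex tree with Culberson--Rudnicki, and only then (3) locates every non-leaf object $z$ on that fixed skeleton by an explicit formula expressing $\overline{\phi(z)}$ as a union of path intervals, one per leaf.

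Your plan contains a concrete error that breaks the running-time analysis: the assertion that ``$\mathcal{T}$ has $O(n)$ vertices'' (and hence that $\mathcal{T}_k$ has $O(k)$ vertices, so one traversal costs $O(k)$) is false for subtree distances. The paper explicitly notes that the minimal representation has $O(n^2)$ vertices, and this is tight. For instance, take a star with $n/2$ leaf objects, and add $n/2$ non-leaf objects $z_j$ whose subtrees each contain the center and reach into every arm by pairwise distinct amounts; each $z_j$ then forces $n/2$ new boundary vertices, giving $\Theta(n^2)$ vertices in total. Consequently your ``single traversal of $\mathcal{T}_k$'' costs $\Theta(k^2)$ in the worst case, and the incremental scheme degrades to $\Theta(n^3)$.

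The paper sidesteps exactly this issue by separating the $O(n)$-vertex skeleton (determined by $L$ alone) from the $O(n^2)$ boundary vertices created by the non-leaf objects, so that no step ever walks the full quadratic-size tree. To salvage an incremental approach you would at minimum have to maintain and traverse only an $O(k)$-size skeleton while recording the growing collection of $\phi$-boundaries implicitly; but that is essentially the leaf/non-leaf decomposition the paper uses, and your proposal gives no mechanism for it. The monotonicity lemma you state is correct and easy (restrict $(\mathcal{T}_{k+1},\phi_{k+1})$ to $X_k$ and minimize, then invoke uniqueness), but it does not rescue the complexity bound.
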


The proof of Theorem~\ref{th:main_theorem_alg} is constructive. 
Similar to~\cite{ando2017algorithm}, our algorithm consists of three parts: 
(1) identify the set of objects $L$ that are \emph{mapped to the leaves}, 
(2) find the minimal representation $(\mathcal{T},\phi)$ for the restriction of $d$ onto $L$, and 
(3) for $x\in X\setminus L$, locate $\phi(x)$ in ${\mathcal T}$ by \emph{measuring the distances from the leaves}.
Since Steps 1 and 3 can be implemented with a time complexity of $O(n^2)$, and there is an $O(n^2)$ time algorithm for Step 2~\cite{culberson1989fast}, the total time complexity of the algorithm is $O(n^2)$. 
Note that even if we know $d \colon X \times X \to \mathbb{R}$ is a tree metric, $\Omega(n^2)$ time is required to reconstruct a tree~\cite{hein1989optimal}.
Therefore, our algorithm achieves the optimal time complexity.
 
This algorithm can also be used to recognize a subtree distance by checking the failure or inconsistency during the process and by verifying equations \eqref{eq:exactly_fits2} after the reconstruction.
\begin{corollary}\label{co:main_theorem_alg}
There exists an $O(n^2)$ time algorithm that determines whether a given input $d: X \times X \to \mathbb{R}$ is a subtree distance or not, where $n = |X|$.
\end{corollary}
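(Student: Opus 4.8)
The plan is to convert the constructive reconstruction algorithm of Theorem~\ref{th:main_theorem_alg} into a recognition procedure by running it on the (a priori arbitrary) input $d$ and then \emph{certifying} its output. That algorithm is correct only under the promise that $d$ is a subtree distance, so on a general input one of two things happens: it either (i) hits an inconsistency --- for example, a step of the leaf-identification or placement phase that would require a negative edge length, or two incompatible placements of the same object --- or (ii) terminates and returns some candidate pair $(\mathcal{T},\phi)$. I would instrument each step so that case (i) immediately reports ``$d$ is not a subtree distance,'' and in case (ii) I would test equation~\eqref{eq:exactly_fits2} directly, reporting ``$d$ is a subtree distance'' if and only if $d(x,y)=d_\mathcal{T}(\phi(x),\phi(y))$ for every pair $x,y\in X$.

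Correctness then has two directions. Soundness is immediate from the final test: if the verification of~\eqref{eq:exactly_fits2} succeeds, then $(\mathcal{T},\phi)$ is by definition a representation of $d$, so $d$ is a subtree distance and the ``yes'' answer is correct. For completeness, if $d$ really is a subtree distance, Theorem~\ref{th:main_theorem_alg} guarantees that the instrumented run does not abort and outputs the (unique minimal) representation, whence the verification necessarily succeeds; contrapositively, every ``no'' answer is correct. The subtle point is that soundness must rest \emph{entirely} on the explicit check of~\eqref{eq:exactly_fits2}, since one cannot exclude a priori that a non-subtree-distance input survives all intermediate consistency checks and yet yields a well-formed but incorrect candidate $(\mathcal{T},\phi)$; the final verification is precisely what rules this out.

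The step I expect to be the main obstacle is performing this verification within the $O(n^2)$ budget, because a naive evaluation of the subtree distances is too expensive. The minimal tree $\mathcal{T}$ has $O(n)$ vertices, so all vertex-to-vertex distances $d_\mathcal{T}(u,v)$ can be tabulated in $O(n^2)$ time, and for each object $x$ the vertex-to-subtree distance $\delta_x(v)=\min\{d_\mathcal{T}(v,u)\mid u\in\phi(x)\}$ can be computed for every $v\in\mathcal{V}(\mathcal{T})$ by a single two-phase traversal of $\mathcal{T}$, costing $O(n)$ per object and $O(n^2)$ overall. Since a shortest path realizing $d_\mathcal{T}(\phi(x),\phi(y))$ is split by any of its vertices $v$ into a piece reaching $\phi(x)$ and a piece reaching $\phi(y)$, one has
\[
  d_\mathcal{T}(\phi(x),\phi(y))=\min_{v\in\mathcal{V}(\mathcal{T})}\bigl(\delta_x(v)+\delta_y(v)\bigr),
\]
but evaluating this minimum separately for each of the $\binom{n}{2}$ pairs would cost $\Theta(n^3)$, as would rescanning the possibly large and overlapping subtrees. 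To reach $O(n^2)$ I would root $\mathcal{T}$ and use, for each subtree $\phi(x)$, its apex $r_x$ (the vertex nearest the root) together with lowest-common-ancestor queries: when $r_x$ and $r_y$ are incomparable, the two subtrees can only meet through $\mathrm{LCA}(r_x,r_y)$, and $d_\mathcal{T}(\phi(x),\phi(y))=d_\mathcal{T}(r_x,r_y)$ reduces to a single table lookup, whereas when one apex is an ancestor of the other one must additionally account for how far each subtree reaches toward the other. Organizing this last case so that every pairwise value is produced in amortized constant time, and the total verification cost therefore stays $O(n^2)$ rather than degrading with the subtree sizes, is the part that requires the most care.
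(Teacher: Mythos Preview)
Your overall strategy coincides with the paper's: the paper states in a single sentence that the reconstruction algorithm of Theorem~\ref{th:main_theorem_alg} becomes a recognizer ``by checking the failure or inconsistency during the process and by verifying equations~\eqref{eq:exactly_fits2} after the reconstruction,'' and you spell out the soundness/completeness argument for exactly this scheme.

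The gap is in your cost analysis of the final verification. You assert that ``the minimal tree $\mathcal{T}$ has $O(n)$ vertices,'' but the paper explicitly records that the minimal representation of $d$ has $O(n^2)$ vertices, since each non-leaf subtree $\overline{\phi(z)}$ contributes up to $|L|=O(n)$ new boundary vertices. With $|\mathcal{V}(\mathcal{T})|=O(n^2)$, tabulating all vertex-to-vertex distances already costs $O(n^4)$ and computing all the arrays $\delta_x$ costs $O(n^3)$, so your bounds collapse. If instead you intend the minimum in
\[
d_\mathcal{T}(\phi(x),\phi(y))=\min_{v}\bigl(\delta_x(v)+\delta_y(v)\bigr)
\]
to range only over the $O(n)$ vertices of the tree for $d|_L$ (before boundary vertices are inserted), then the identity is false: when $\phi(x)$ and $\phi(y)$ lie strictly inside the same edge $(a,b)$, neither endpoint lies on the shortest path between them, and both $\delta_x(a)+\delta_y(a)$ and $\delta_x(b)+\delta_y(b)$ strictly overshoot the true distance. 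Your apex/LCA sketch does not handle this case either, since two non-leaf subtrees can share the same apex yet be separated entirely within a single edge. So as written the verification step is neither correctly specified nor correctly bounded. The paper itself offers no details on how the verification is carried out within $O(n^2)$, so you are not being held to a higher bar than the paper; but the explicit claims you add are incorrect and would need to be replaced.
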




\section{Proofs}

We assume that there are no objects $x, y \in X$ such that $d(x,z) = d(y,z)$ for all $z \in X$. 
This assumption is satisfied by removing such elements after lexicographic sorting, which requires $O(n^2)$ time~\cite{wiedermann1979complexity}. 
Clearly, this preprocessing does not change the minimal representation.
We also assume that $|X| \ge 3$. Otherwise, the theorems trivially hold.

First, we prove Theorem~\ref{th:main_theorem}.
We identify the properties of a minimal representation.
\begin{lemma}
\label{lem:mimimal_representation}
Let $(\mathcal{T}, \phi)$ be a minimal representation of a subtree distance $d: X \times X \to \mathbb{R}$.
Then, the following properties hold.
\begin{enumerate}
  \item For each edge $e \in \mathcal{E}(\mathcal{T})$, the length of $e$ is positive.
  \item For each leaf vertex $u \in \mathcal{V}(\mathcal{T})$, there exists $x \in X$ such that $\phi(x) = \{u\}$.
\end{enumerate}
\end{lemma}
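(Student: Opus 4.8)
The plan is to prove each property by a contradiction that exploits the minimality of $(\mathcal{T},\phi)$: if a property fails, I will exhibit a strictly smaller representation of the same subtree distance $d$, contradicting that $(\mathcal{T},\phi)$ is a proper-minor-minimal representation.

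For property (1), suppose some edge $e = \{u,v\} \in \mathcal{E}(\mathcal{T})$ has length zero. The natural move is to contract $e$, merging $u$ and $v$ into a single vertex and obtaining a proper minor $\mathcal{T}'$. I would define $\phi'$ by replacing any occurrence of $u$ or $v$ in the image sets $\phi(x)$ by the merged vertex. The key claim to verify is that contracting a zero-length edge preserves all pairwise subtree distances, i.e. $d_{\mathcal{T}'}(\phi'(x),\phi'(y)) = d_{\mathcal{T}}(\phi(x),\phi(y))$ for all $x,y$, and that each $\phi'(x)$ still induces a connected subgraph (connectivity is preserved under contraction). Since distances along paths are unchanged when a zero-length edge is collapsed, $(\mathcal{T}',\phi')$ is again a representation of $d$, contradicting minimality.

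For property (2), suppose some leaf $u$ is not equal to $\phi(x)$ for any $x$, i.e. no object is mapped exactly to the singleton $\{u\}$. Let $e = \{u,v\}$ be the unique edge incident to the leaf $u$. The goal is to show $u$ together with $e$ can be removed — either the leaf is used by no $\phi(x)$ at all, in which case deletion is immediate, or every $\phi(x)$ containing $u$ also contains $v$ (otherwise such a $\phi(x)$ would be the singleton $\{u\}$, since a connected subtree containing the leaf $u$ but not its neighbor $v$ must be exactly $\{u\}$). In the latter case I would delete $u$ and $e$, replacing each $\phi(x)$ by $\phi(x)\setminus\{u\}$; I must check this set is still nonempty and connected, and that no pairwise distance decreases, which holds because the closest points realizing any $d_{\mathcal{T}}(\phi(x),\phi(y))$ can be taken away from the pendant leaf $u$. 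This again yields a proper minor representing $d$, contradicting minimality.

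The main obstacle I anticipate is property (2), specifically the careful case analysis around which image sets $\phi(x)$ touch the leaf $u$: I must rule out the possibility that deleting $u$ disconnects some $\phi(x)$ or changes some realized distance. The clean observation that makes this work is that a connected subtree meeting a leaf $u$ but no other vertex must be the singleton $\{u\}$ — so the assumption "$\{u\}$ is not in the image" forces every image set using $u$ to also contain the neighbor $v$, after which deletion is harmless. Property (1) is comparatively routine once the invariance of distances under zero-length contraction is stated.
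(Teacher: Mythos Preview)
Your proposal is correct and follows essentially the same approach as the paper: contract a zero-length edge for (1), and for (2) observe that if no $\phi(x)$ equals $\{u\}$ then every $\phi(x)$ containing the leaf $u$ also contains its unique neighbor $v$, so $u$ can be deleted without changing any distance. Your write-up is in fact more careful than the paper's (you explicitly verify connectivity and nonemptiness of the modified image sets), but the underlying argument is identical.
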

\begin{proof} \ 
1. If there is an edge $e$ of zero length, we can contract $e$ from the representation.

2. Let $u$ be a leaf vertex of $\mathcal{T}$. 
If there is no $x \in X$ with $u \in \phi(x)$, we can remove $u$ from the representation to obtain a smaller representation.
Suppose that, for all $x \in X$ with $u \in \phi(x)$, $\phi(x)$ contains at least two elements.
Then, these $\phi(x)$s contain the unique adjacent vertex $v$ of $u$. 
Since $d(v, w) \le d(u, w)$ for all $w \in \mathcal{V}(\mathcal{T}) \setminus \{u\}$, $u$ does not contribute any shortest paths in the tree.
Therefore, we can remove $u$ from the representation.
\qedhere
\end{proof}

Motivated by Property~2 in Lemma~\ref{lem:mimimal_representation}, we introduce the following definition.
For a minimal representation $(\mathcal{T}, \phi)$, an object $x \in X$ is a \emph{leaf object} if $\phi(x) = \{u\}$ for some leaf $u \in \mathcal{V}(\mathcal{T})$.

We defined a leaf object by specifying a minimal representation. However, as shown below, the set of leaf objects is uniquely determined by $d$.
First, we show that there exists an object that is a leaf object for any minimal representation.
\begin{lemma}
\label{lem:farthest}
Let $(r, r') \in \argmax_{x,y \in X} d(x,y)$. 
Then, for any minimal representation, $r$ and $r'$ are leaf objects.
\end{lemma}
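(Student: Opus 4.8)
The plan is to fix an arbitrary minimal representation $(\mathcal{T},\phi)$ and drive everything from the two properties of Lemma~\ref{lem:mimimal_representation}. Write $D=d(r,r')=\max_{x,y\in X}d(x,y)$, and pick vertices $p\in\phi(r)$ and $q\in\phi(r')$ attaining the minimum in $d(r,r')=d_\mathcal{T}(\phi(r),\phi(r'))$, so that $d_\mathcal{T}(p,q)=D$. Since the objects have pairwise distinct distance profiles and $|X|\ge 3$, the distances cannot all vanish, so $D>0$ and $\mathcal{T}$ has at least one edge. I will establish the two halves of the statement separately: that $p$ and $q$ are leaves of $\mathcal{T}$, and that $\phi(r)=\{p\}$ and $\phi(r')=\{q\}$.

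The first step is to identify $D$ with the diameter of $\mathcal{T}$, that is, with the length of a longest path. Let $a,b\in\mathcal{V}(\mathcal{T})$ be the endpoints of such a path; because all edge lengths are positive (Property~1), these endpoints are leaves, so by Property~2 there are objects $x_a,x_b$ with $\phi(x_a)=\{a\}$ and $\phi(x_b)=\{b\}$. Then $d(x_a,x_b)=d_\mathcal{T}(a,b)=\operatorname{diam}(\mathcal{T})$, and maximality of $D$ gives $D\ge\operatorname{diam}(\mathcal{T})$. Conversely, the path joining $p$ and $q$ is a path of $\mathcal{T}$, so $D=d_\mathcal{T}(p,q)\le\operatorname{diam}(\mathcal{T})$. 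Hence $D=\operatorname{diam}(\mathcal{T})$ and the path joining $p$ and $q$ is itself a longest path; an endpoint of a longest path cannot be internal, since appending an unused incident edge (of positive length, by Property~1) would strictly lengthen the path, so $p$ and $q$ must be leaves.

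The second step upgrades this to $\phi(r)=\{p\}$. If $\phi(r)$ contained a vertex other than the leaf $p$, then, being connected, it would also contain the unique neighbour $p^{*}$ of $p$, which lies on the path joining $p$ and $q$; but then $d_\mathcal{T}(p^{*},q)=D-l(\{p,p^{*}\})<D$ by Property~1, contradicting the fact that $D=d_\mathcal{T}(\phi(r),\phi(r'))$ is the minimum distance between the two subtrees. Therefore $\phi(r)=\{p\}$ with $p$ a leaf, i.e.\ $r$ is a leaf object, and the argument for $r'$ is symmetric.

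The step I expect to be the crux is the identification $D=\operatorname{diam}(\mathcal{T})$, since this is where minimality is genuinely used: Property~2 supplies the leaf objects $x_a,x_b$ that realise the diameter as an honest object distance, while Property~1 is what forces diameter endpoints to be leaves and also powers the final singleton argument. Everything else is routine tree geometry once these two facts are in place.
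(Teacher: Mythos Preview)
Your proof is correct and follows the same idea as the paper, which dispatches the lemma in a single sentence: ``For any tree, the farthest pair is attained by a pair of leaves.'' You have simply unpacked that sentence rigorously, making explicit how Properties~1 and~2 of Lemma~\ref{lem:mimimal_representation} enter: Property~2 lets the tree diameter be realised by actual objects (so $D=\operatorname{diam}(\mathcal{T})$), and Property~1 forces diameter endpoints to be leaves and then forces $\phi(r)$, $\phi(r')$ to be singletons. Nothing in your argument deviates from the paper's intent; it is the fleshed-out version of what the one-liner asserts.
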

\begin{proof}
For any tree, the farthest pair is attained by a pair of leaves.
\end{proof}
Next, we show that the leaf objects are characterized by $d$ and a leaf object $r$.
\begin{lemma} \label{lem:leaf_characterization}
Let $(\mathcal{T}, \phi)$ be a minimal representation of subtree distance $d: X \times X \to \mathbb{R}$, and let $r \in X$ be a leaf object.
An object $x \in X \setminus \{r\}$ is a leaf object if and only if $d(y,r) < d(x,y) + d(x,r)$ for all $y \in X \setminus \{x,r\}$.
\end{lemma}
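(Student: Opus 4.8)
The plan is to work in the tree $\mathcal{T}$ rooted at the leaf $u_r$, where $\phi(r) = \{u_r\}$, and to measure everything by \emph{depth}, i.e.\ distance to $u_r$. For any object $z$, since $\phi(z)$ is a connected subtree it has a unique vertex $g_z$ of minimum depth (its \emph{top}), and the closest point of $\phi(z)$ to $u_r$ is exactly $g_z$; hence $d(z,r) = \operatorname{depth}(g_z)$. More generally, if a leaf $\ell$ of $\mathcal{T}$ lies below a vertex $b \in \phi(z)$ and $\phi(z)$ does not extend below $b$, then the path from $u_r$ to $\ell$ enters $\phi(z)$ at $b$, so $b$ is the gate of $\phi(z)$ toward $\ell$ and $d_\mathcal{T}(\phi(z),\ell) = d_\mathcal{T}(b,\ell)$. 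These gate identities are the only tree facts I will need.

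For the forward direction, suppose $x$ is a leaf object, say $\phi(x) = \{v\}$ with $v$ a leaf. First I would record the non-strict triangle inequality $d(y,r) \le d(x,y) + d(x,r)$: taking a point $p \in \phi(y)$ closest to $v$ and combining $d_\mathcal{T}(p, u_r) \le d_\mathcal{T}(p,v) + d_\mathcal{T}(v,u_r)$ with $d(y,r) \le d_\mathcal{T}(p, u_r)$ gives it at once. The substance is \emph{strictness}: equality would force both that $p$ realises $d(y,r)$ and that $v$ lies on the path from $u_r$ to $p$. Since $v$ is a leaf, it can only be an endpoint of that path, so either $v = u_r$ or $v = p \in \phi(y)$; the former gives $\phi(x) = \phi(r)$ and is excluded by the no-duplicate assumption, while in the latter $v$ would have to be the minimum-depth top of $\phi(y)$, which is impossible for a leaf unless $\phi(y) = \{v\}$, again excluded. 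Hence the inequality is strict for every $y \ne x, r$.

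For the converse I would argue by contraposition: assuming $x$ is not a leaf object, I exhibit a single object $y$ with $d(y,r) \ge d(x,y) + d(x,r)$, which breaks the stated condition. Let $b$ be a deepest vertex of $\phi(x)$. Using Property~2 of Lemma~\ref{lem:mimimal_representation} (every leaf of $\mathcal{T}$ carries a leaf object), I pick a leaf $\ell$ of $\mathcal{T}$ equal to $b$ (when $b$ is itself a leaf of $\mathcal{T}$) or strictly below $b$ (when $b$ is internal), and set $\phi(y) = \{\ell\}$. In each case $b$ is the gate of $\phi(x)$ toward $\ell$, so $d(x,y) = d_\mathcal{T}(b,\ell)$ and $d(y,r) = \operatorname{depth}(\ell) = \operatorname{depth}(b) + d_\mathcal{T}(b,\ell)$, whereas $d(x,r) = \operatorname{depth}(g_x) \le \operatorname{depth}(b)$; combining these yields $d(y,r) \ge d(x,y) + d(x,r)$, as required.

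The main obstacle I anticipate is the bookkeeping of the case split in the converse: verifying that a suitable leaf $\ell$ and gate $b$ always exist when $x$ fails to be a leaf object. This splits according to whether $\phi(x)$ is a single internal vertex (so $b = g_x$ is internal and $\ell$ is taken strictly below it, giving equality $d(y,r) = d(x,y)+d(x,r)$) or has at least two vertices (so a deepest $b \ne g_x$ exists, and $\operatorname{depth}(b) > \operatorname{depth}(g_x)$ makes the inequality strict). In each branch I must also confirm that $y \ne x, r$ and that $\phi(x)$ does not extend below $b$, so that $b$ is genuinely the gate. By contrast, the forward direction's strictness argument is short once the leaf-cannot-lie-interior-to-a-path observation is combined with the no-duplicate assumption.
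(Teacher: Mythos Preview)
Your proof is correct and follows essentially the same approach as the paper: for the contrapositive of the ``if'' part you exhibit a leaf object $y$ whose $\phi(r)$--$\phi(y)$ path meets $\phi(x)$ (you construct it explicitly via a deepest vertex $b$ and a leaf below it, the paper merely asserts its existence), and for the ``only if'' part you use that a leaf cannot lie in the interior of a tree path together with the no-duplicate assumption to force strictness. Your rooted-tree/depth/gate bookkeeping makes the argument more explicit than the paper's terse version, but the underlying ideas are identical.
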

\begin{proof}
(The ``if'' part).
Suppose $x$ is not a leaf object. 
Then, there is another leaf object $y \in X \setminus \{x, r\}$ such that the path from $\phi(r)$ to $\phi(y)$ intersects $\phi(x)$.
By considering distances among $\phi(r)$, $\phi(y)$ and $\phi(x)$ on this path, we have  $d(r,y) \ge d(x,r) + d(x,y)$.

(The ``only if'' part). Suppose $x$ is a leaf object.
Then, for any $y \in X \setminus \{x,r\}$ the shortest path from $\phi(y)$ to $\phi(r)$ never intersects $\phi(x)$. 
Thus, we have $d(y,r) < d(x,y) + d(x,r)$.
Here, we used the assumption that there is no $x, y \in X$ such that $d(x,z) = d(y,z)$ for all $z \in X$.
\end{proof}

Since we can take a leaf object $r$ universally by Lemma~\ref{lem:farthest}, and the condition in Lemma~\ref{lem:leaf_characterization} is described without specifying the underlying representation, we can conclude that the set of leaf objects is uniquely determined by $d$.
Thus, we obtain the following corollary.
\begin{corollary}
Any minimal representation of a subtree distance has the same leaf objects. 
\qed
\end{corollary}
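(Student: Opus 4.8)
The plan is to combine Lemma~\ref{lem:farthest} and Lemma~\ref{lem:leaf_characterization} so as to obtain a description of the set of leaf objects that refers only to the input $d$. First I would fix a pair $(r, r') \in \argmax_{x,y \in X} d(x,y)$; since the argmax is determined by $d$ alone, this choice of $r$ is made once and for all, independently of any representation. Lemma~\ref{lem:farthest} guarantees that this same $r$ is a leaf object in \emph{every} minimal representation, and this is the crucial point that allows $r$ to serve as a universal anchor across all representations.

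Next I would apply Lemma~\ref{lem:leaf_characterization} to an arbitrary minimal representation $(\mathcal{T}, \phi)$, taking as the distinguished leaf object the universal $r$ fixed above. The lemma then asserts that an object $x \in X \setminus \{r\}$ is a leaf object of $(\mathcal{T}, \phi)$ if and only if $d(y,r) < d(x,y) + d(x,r)$ holds for all $y \in X \setminus \{x, r\}$. The key observation is that this characterizing inequality involves only the values of $d$ and the fixed object $r$, and makes no reference to the tree $\mathcal{T}$ or the map $\phi$. Hence the leaf-object set of $(\mathcal{T}, \phi)$ equals $\{r\} \cup \{x \in X \setminus \{r\} : d(y,r) < d(x,y) + d(x,r) \text{ for all } y \in X \setminus \{x, r\}\}$, an expression that is manifestly independent of the chosen representation.

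Since both $r$ and the defining condition are pinned down by $d$ alone, the displayed set is identical for every minimal representation, which is exactly the claim. I do not expect a genuine obstacle here, as the substantive content is already carried by the two preceding lemmas; the only subtlety requiring care is the logical ordering of the invocations. One must first use Lemma~\ref{lem:farthest} to produce an $r$ that is simultaneously a leaf object of \emph{all} minimal representations, and only then apply Lemma~\ref{lem:leaf_characterization}, whose hypothesis presupposes that $r$ is a leaf object of the particular representation under consideration.
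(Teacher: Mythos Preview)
Your proposal is correct and matches the paper's own argument essentially verbatim: the paper likewise notes that Lemma~\ref{lem:farthest} furnishes a leaf object $r$ valid for every minimal representation, and that the characterization in Lemma~\ref{lem:leaf_characterization} then pins down the leaf-object set purely in terms of $d$.
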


Now, we observe that the dissimilarity $d|_L: L \times L \to \mathbb{R}$ obtained by restricting $d: X \times X \to \mathbb{R}$ on the set of leaf objects $L \subseteq X$ forms a tree metric because the leaf objects are mapped to singletons.
Since the minimal representation of a tree metric is unique~\cite{buneman1971recovery}, any minimal representation of $d$ has the same topology as the minimal representation of $d|_L$.

The remaining issue is to show that each non-leaf object $x \in X \setminus L$ is uniquely located in the minimal representation $(\mathcal{T}, \phi)$ of $d|_L$.
This is clear because any connected subgraph in a tree is uniquely identified by the distances from the leaves.
More precisely, we obtain the following explicit representation.

We first consider $\mathcal{T}$ as a continuous object.
We fix a leaf object $r \in L$. 
For each leaf object $x \in L \setminus \{r\}$, there exists a unique path $\text{path}(\phi(r), \phi(x))$ from $\phi(r)$ to $\phi(x)$ in $\mathcal{T}$.
Let $I(r,a;x,b)$ be the interval on the path having a distance of at least $a$ from $\phi(r)$ and at least $b$ from $\phi(x)$, i.e.,
\begin{align}
	I(r,a;x,b) = \{ u \in \text{path}(\phi(r), \phi(x)) : d_\mathcal{T}(\phi(r), u) \ge a, d_\mathcal{T}(\phi(x), u) \ge b \}.
\end{align}
For $U \subseteq \mathcal{V}(\mathcal{T})$, we denote by $\overline{U}$ the subgraph of $\mathcal{T}$ induced by $U$.
Note that both $I(r, a; x, b)$ and $\overline{U}$ are continuous objects.
By using these notations, we obtain the following.
\begin{lemma}
\label{lem:nonleaf}
Let $d: X \times X \to \mathbb{R}$ be a subtree distance and $L$ be the set of leaf objects.
Let $(\mathcal{T}, \phi)$ be the minimal representation of $d|_L$. 
Fix a leaf object $r \in L$.
Then, we have for each non-leaf object $z \in X \setminus L$
\begin{align}
\label{eq:phiz}
	\overline{\phi(z)} = \bigcup_{x \in L \setminus \{r\}} I(r, d(r,z); x, d(x,z)).
\end{align}
\end{lemma}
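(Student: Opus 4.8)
The plan is to prove the identity one path at a time and then take the union. Concretely, I would fix a leaf object $x \in L \setminus \{r\}$, write $P = \text{path}(\phi(r), \phi(x))$, and first establish the local statement
\begin{align}
P \cap \overline{\phi(z)} = I(r, d(r,z); x, d(x,z)). \notag
\end{align}
Since $(\mathcal{T},\phi)$ is a representation of $d$, we have $d(r,z) = d_\mathcal{T}(\phi(r), \phi(z))$ and $d(x,z) = d_\mathcal{T}(\phi(x), \phi(z))$, so these two numbers are precisely the distances from the two endpoints of $P$ to the subtree $\overline{\phi(z)}$.

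The key tool is the gate (nearest-point projection) property of subtrees in a tree: for a subtree $S$ and any point $v$, there is a unique closest point $\pi_S(v) \in S$, and the geodesic from $v$ to every $s \in S$ passes through $\pi_S(v)$. I would then split into two cases. If $P$ meets $\overline{\phi(z)}$, then, both being convex, the intersection is a subpath with endpoints $a, b$, say with $a$ nearer to $\phi(r)$; applying the gate property to the initial segment of $P$ (whose only point in $\overline{\phi(z)}$ is $a$) forces $a = \pi_{\phi(z)}(\phi(r))$, hence $d_\mathcal{T}(\phi(r), a) = d(r,z)$, and symmetrically $d_\mathcal{T}(\phi(x), b) = d(x,z)$; parametrising $P$ by distance from $\phi(r)$ then identifies the subpath with $I(r,d(r,z);x,d(x,z))$. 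If $P$ is disjoint from $\overline{\phi(z)}$, the unique bridge between two disjoint subtrees supplies a common point $c \in P$ and gate $g \in \overline{\phi(z)}$ with $d_\mathcal{T}(c,g) = \delta > 0$ through which every geodesic from $P$ to $\overline{\phi(z)}$ passes, giving $d(r,z) + d(x,z) = d_\mathcal{T}(\phi(r),\phi(x)) + 2\delta > d(r,x)$, so $I(r,d(r,z);x,d(x,z))$ is empty, matching the empty intersection. This case analysis is the main obstacle, as it is where the geometry of subtrees must be converted into the exact numerical thresholds $d(r,z)$ and $d(x,z)$ that define the interval $I$.

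Finally I would take the union over $x \in L \setminus \{r\}$. By Lemma~\ref{lem:mimimal_representation}, every leaf of $\mathcal{T}$ is $\phi(x)$ for some leaf object $x$, and every point of $\mathcal{T}$ lies on a geodesic from $\phi(r)$ to some leaf distinct from $\phi(r)$; hence $\bigcup_{x \in L \setminus \{r\}} \text{path}(\phi(r),\phi(x)) = \mathcal{T}$. Distributing the intersection over the union then yields
\begin{align}
\bigcup_{x \in L \setminus \{r\}} I(r, d(r,z); x, d(x,z)) = \Bigl(\bigcup_{x \in L \setminus \{r\}} \text{path}(\phi(r),\phi(x))\Bigr) \cap \overline{\phi(z)} = \mathcal{T} \cap \overline{\phi(z)} = \overline{\phi(z)}, \notag
\end{align}
which is exactly \eqref{eq:phiz}.
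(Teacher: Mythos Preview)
Your proposal is correct and follows essentially the same two-step approach as the paper: first identify $\text{path}(\phi(r),\phi(x)) \cap \overline{\phi(z)}$ with the interval $I(r,d(r,z);x,d(x,z))$, then use that the root-to-leaf paths cover $\mathcal{T}$ and distribute the intersection over the union. The paper simply asserts the first step in one line (``the intersection of $\overline{\phi(z)}$ and the path from $\phi(r)$ to $\phi(x)$ is the interval $I(r,d(r,z);x,d(x,z))$''), whereas you supply the gate/projection argument and the disjoint-case computation that justify it; this is added detail, not a different strategy.
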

\begin{proof}
Since $\overline{\phi(z)}$ is a connected subgraph, the intersection of $\overline{\phi(z)}$ and the path from $\phi(r)$ to $\phi(x)$ is the interval $I(r, d(r,z); x, d(x,z))$.
  Since any tree is covered by the paths from a fixed leaf $\phi(r)$ and the other leaves $\phi(x)$ for $x \in L \setminus \{r\}$, we have
\begin{align}
  \overline{\phi(z)} &= \left( \bigcup_{x \in L \setminus \{r\}} \text{path}(\phi(r), \phi(x)) \right) \cap \overline{\phi(z)} \notag \\
  &= \bigcup_{x \in L \setminus \{r\}} \left( \text{path}(\phi(r), \phi(x)) \cap \overline{\phi(z)} \right) \notag \\
  &= \bigcup_{x \in L \setminus \{r\}} I(r, d(r,z); x, d(x,z)). 
\end{align}
\end{proof}
By placing the vertices on the boundaries of $\overline{\phi(z)}$ for $z \in X \setminus L$ and then letting $\phi(z)$ be the vertices intersecting $\overline{\phi(z)}$ for $z \in X \setminus L$, we obtain the minimal representation of $d$.
This completes the proof of Theorem~\ref{th:main_theorem}.
Note that the number of the vertices of the minimal representation is $O(n^2)$ since each $\overline{\phi(z)}$ has at most $|L| = O(n)$ boundaries.

Now, we prove Theorem~\ref{th:main_theorem_alg}.
The above proof of Theorem~\ref{th:main_theorem} is constructive, and it provides the following algorithm:
\begin{enumerate}
\item Identify the set $L$ of leaf objects by Lemmas~\ref{lem:farthest} and \ref{lem:leaf_characterization}.
\item Find the minimal representation of $d|_L$ by the existing algorithm.
\item Locate the non-leaf objects by Lemma~\ref{lem:nonleaf}.
\end{enumerate}
We evaluate the time complexity of this algorithm.
Step 1 is conducted in $O(n^2)$ time for finding a leaf object $r \in X$ and $O(n^2)$ time for finding other leaf objects.
Step 2 is performed in $O(n^2)$ time by using Culberson and Rudnicki's algorithm~\cite{culberson1989fast}.
Also, Step 3 is performed in $O(n^2)$ time by equation \eqref{eq:phiz} since, for each $z$, it processes each edge at most once.
Hence, Theorem~\ref{th:main_theorem_alg} is proved.

\section*{Acknowledgments}

The authors thank Hiroshi Hirai and anonymous referees for helpful comments.
This work was supported by the Japan Society for the Promotion of Science, KAKENHI Grant Number 15K00033.

\bibliographystyle{plain}
\bibliography{main}

\end{document}